\DeclareMathOperator{\EX}{\mathbb{E}}
\newtheorem*{theorem}{Theorem}
\newtheorem{proposition}{Proposition}
\newtheorem{definition}{Definition}
\newtheorem*{remark}{Remark}
\newtheorem*{corollary}{Corollary}
\journal{Theoretical Population Biology}
\begin{document}

\begin{frontmatter}



\title{Drewnowski's index to measure lifespan variation: Revisiting the Gini coefficient of the life table}


\author[label1,label2]{Jos\'e Manuel Aburto}
\author[label3,label4]{Ugofilippo Basellini}
\author[label2]{Annette Baudisch}
\author[label2,label5,label6]{Francisco Villavicencio\corref{cor1}}

\cortext[cor1]{Corresponding author: fvillavicencio@ced.uab.es.}
\address[label1]{Leverhulme Centre for Demographic Science, Department of Sociology and Nuffield College, University of Oxford, Oxford OX1 1JD, UK.}
\address[label2]{Interdisciplinary Centre on Population Dynamics, University of Southern Denmark, 5230 Odense, Denmark.}
\address[label3]{Laboratory of Digital and Computational Demography, Max Planck Institute for Demographic Research, 18057 Rostock, Germany.}
\address[label4]{Mortality, Health and Epidemiology Unit, Institut national d'\'{e}tudes d\'emographiques (INED), 93322 Aubervilliers, France}
\address[label5]{Centre for Demographic Studies (CED), Universitat Aut\`onoma de Barcelona, 08193 Bellaterra, Spain.}
\address[label6]{Department of International Health, Johns Hopkins Bloomberg School of Public Health, Baltimore, MD 21205, USA.}


\begin{abstract}
The Gini coefficient of the life table is a concentration index that provides information on lifespan variation. Originally proposed by economists to measure income and wealth inequalities, it has been widely used in population studies to investigate variation in ages at death. We focus on a complementary indicator, Drewnowski's index, which is as a measure of equality. We study its mathematical properties and analyze how changes over time relate to changes in life expectancy. Further, we identify the threshold age below which mortality improvements are translated into decreasing lifespan variation and above which these improvements translate into increasing lifespan inequality. We illustrate our theoretical findings simulating scenarios of mortality improvement in the Gompertz model. Our experiments demonstrate how Drewnowski's index can serve as an indicator of the shape of mortality patterns. These properties, along with our analytical findings, support studying lifespan variation alongside life expectancy trends in multiple species.
\end{abstract}

\begin{keyword}
concentration index \sep Gompertz \sep life expectancy \sep lifespan inequality \sep mortality \sep threshold age



\end{keyword}

\end{frontmatter}



\section{Introduction}
\label{sec:intro}

The life table is an essential tool in mortality studies. It represents the current mortality experience of a population and it is usually summarized by life expectancy at birth $(e_o)$. Life expectancy at birth is the average years a synthetic cohort of newborns is expected to live if individuals were to experience the current mortality conditions throughout their lives. It is, however, an average indicator that masks variability in ages at death, which can be substantial. This variability is often referred to as lifespan variation or lifespan inequality, and has received increasing attention over the past two decades (see, to mention a few, \citealt{Wilmoth1999, edwards2005inequality, Smits2009, Baudisch2011, Vaupel2011, Fernandez2015, colchero2016emergence, ebeling2018rectangularization, vanraalte2018Science, permanyer2019global, aburtoDynamics, VaupelLongevityPNAS}). Lifespan variation reveals the uncertainty about the eventual age at death at the individual level, and measures how evenly mortality conditions are shared at the population level. There exist several indicators to measure lifespan variation (for an overview, see \citealt{Shkolnikov2003, vanRaalte2013}), such as the entropy of the life table \citep{leser_variations_1955, Keyfitz1977, demetrius1978adaptive}, the standard deviation or variance of the age-at-death distribution (as applied in \citealt{tuljapurkar2011variance}), the coefficient of variation (as applied in \citealt{Aburto2018Eastern}; Aburto, Wensink et al. \citeyear{aburto2018potential}), years of life lost $(e^\dagger)$ \citep{Goldman1986, Vaupel1986, Hakkert1987, Vaupel2003}, or the Gini coefficient \citep{Hanada1983}.

Here we study in greater detail the Gini coefficient of the life table $(G)$ and its complementary measure, Drewnowski's index $(D)$ \citep{Drewnowski1982, Hanada1983}, from a formal demographic perspective. We additionally aim to understand how changes in age-specific mortality underpin trends in lifespan variation. We focus on how changes over time in $D$ relate to changes in $e_o$, and highlight a new measure of absolute variation related to perturbation theory, named $\vartheta$. We provide the mathematical foundation of how Drewnowski's index (and analogously the Gini coefficient) evolves over time, and give analytical formulae to find the threshold age below which mortality improvements are translated into decreasing lifespan variation and above which these improvements translate into increasing lifespan inequality.


\section{The Gini coefficient and Drewnowski's index}
\label{sec:GiniCoef}

The Gini coefficient is one of the most popular indices employed in social sciences to measure concentration in the distribution of a non-negative random variable \citep{gini1912variabilita, gini1914sulla}. Originally proposed by economists to measure income or wealth inequality, this coefficient has been applied in demography and survival analysis to investigate within-group inequality in terms of ages at death (see, for instance, \citealt{Hanada1983, Shkolnikov2003, bonetti2009gini, gigliarano2017longevity, jones2018complexity, diaz2018mortality, basellini2019modelling, aburtoDynamics}).

\subsection{Definition}

As thoroughly discussed by \cite{yitzhaki2013gini}, there are several equivalent ways to define the Gini coefficient. Let $X$ be a non-negative random variable with probability density function $f(x)$ and expected value $\EX[X]$, a common definition is

$$
  G = \frac{1}{2\EX[X]}\int_0^\infty \int_0^\infty |x_1 - x_2|\,f(x_1)\,f(x_2)\,dx_1\,dx_2\;.  
$$
Accordingly, if $X$ is a random variable of the ages at death in a population, the Gini coefficient expresses the average of absolute differences in individual lifespans relative to the mean length of life $\EX[X]$.

\cite{michettid1957}, and later \cite{Hanada1983}, suggested a re-formulation of the Gini coefficient in terms of the life table functions, given by
\begin{equation}
  \label{Gini_definition}
  G = 1-\frac{\int_0^\infty\ell(x,t)^2\,dx}{\int_0^\infty\ell(x,t)\,dx} = 1- \frac{\vartheta}{e_o}\;,
\end{equation}
where $\ell(x,t)$ is the life table survival function at time $t$, $e_o=\int_0^\infty\ell(x,t)\,dx$ the life expectancy at birth at time $t$, and $\vartheta=\int_0^\infty \ell(x,t)^2\,dx$ is the resulting life expectancy at birth of doubling the hazard at all ages. \cite{jones2018complexity} interpret $\vartheta$ as a measure of \emph{shared life expectancy}, that is, the average time two newborns at time $t$ are expected to survive together. For the purposes of this article, the definition of the Gini coefficient in~\eqref{Gini_definition} will be used in the following.

\subsection{Main properties}

The Gini coefficient takes values between 0 and 1, and can be interpreted as a \emph{measure of inequality}. A value of 0 denotes equality in ages at death, i.e.~when every individual in the population has the exact same length of life. The index increases approaching 1 as lifespans become more spread and unequal in the population. This makes the interpretation clear and intuitive: higher values correspond to greater within-group inequality in ages at death. 

An additional attractive feature of the Gini coefficient is that it fulfills three important properties for inequality indices \citep{sen1973economic, anand1983inequality}: (i)~it does not change if the number of individuals at each age at death is changed by the same proportion (\emph{population-size independence}); (ii)~it does not change if each individual lifespan is changed by the same proportion (\emph{scale independence}); and (iii)~it decreases if years of life are transferred from a longer to a shorter lived individual (\emph{Pigou-Dalton condition}). Note that property~(i) enables straightforward comparison between populations, including comparisons between different species \citep{Wrycza2015}. Furthermore, the coefficient is not too sensitive to redistributions at early ages of life, and it well reflects changes at adult ages \citep{Shkolnikov2003}. 

Being bounded between 0 and 1, the Gini coefficient can be readily transformed from a \emph{measure of inequality} into a \emph{measure of equality} of lifespans. From~\eqref{Gini_definition}, a complementary measure of lifespan equality immediately derives, the Drewnowski's index, defined as
\begin{equation}
\label{Drew_definition}
D = 1-G = \frac{\vartheta}{e_o}=\frac{\int_0^\infty\ell(x,t)^2\,dx}{\int_0^\infty\ell(x,t)\,dx}\;,
\end{equation}
and sharing all the important properties of $G$. According to \cite{Hanada1983}, it was first proposed by Jan Drewnowski on a working group on health indicators at the World Health Organization in the early 1980s \citep{Drewnowski1982}.


\section{Changes over time in Drewnowski's index}
\label{sec:time-derv}

In order to analyze changes over time in Drewnowski's index---or, equivalently, the Gini coefficient---we aim to find an analytical expression for the time derivative of $D$. In the following, a dot over a function will denote the partial derivative with respect to time, but variable $t$ will be omitted for simplicity.

\subsection{Relative derivative of $D$}

\begin{proposition} 
	Let $D=\vartheta\,/\,e_o$ be Drewnowski's index, where $\vartheta=\int_0^\infty\ell(x)^2\,dx$, $e_o=\int_0^\infty\ell(x)\,dx$ is the life expectancy at birth, and $\ell(x)$ the probability of surviving from birth to age $x$. Then, relative changes over time in $D$ are given by
	\begin{equation}
	\label{changes_lambda}
	\frac{\dot{D}}{D}=\frac{\dot{\vartheta}}{\vartheta} - \frac{\dot{e}_o}{e_o}\;.
	\end{equation}
\end{proposition}

\begin{proof}%
	Note that $D=\vartheta\,/\,e_o$ implies that  $D\,e_o-\vartheta=0$. Differentiating with respect to time yields
	\begin{eqnarray*}
		\dot{D}\,e_o + D\,\dot{e}_o - \dot{\vartheta} = 0\;. 
	\end{eqnarray*}
Solving for $\dot{D}$ and dividing both sides by $D$, we get \eqref{changes_lambda}. 
\end{proof}

Equation~\eqref{changes_lambda} decomposes relative changes in $D$ into relative changes of the shared life expectancy between two individuals $\vartheta$, and relative changes in the life expectancy at birth $e_o$.

\subsection{Time derivatives of $e_o$ and $\vartheta$}

\cite{Vaupel2003} showed that changes over time in life expectancy at birth are a weighted average of the total rates of mortality improvements, expressed as
\begin{equation}
  \label{ex.derivative}
  \dot{e}_o=\int_0^\infty\rho(x)\,w(x)\,dx\;.
\end{equation} 
Function $\rho(x)=-\dot{\mu}(x)\,/\,\mu(x)$ stands for the age-specific rates of mortality improvement, where $\mu(x)$ is the force of mortality (hazard rate) at age $x$. The weights $w(x)=\mu(x)\,\ell(x)\,e(x)$ are a measure of the importance of death at age $x$, where $e(x)=\int_x^\infty\ell(a)\,da\,/\,\ell(x)$ is the remaining life expectancy at age $x$. Following a similar approach, we aim to express the time derivative of $\vartheta$ as a weighted average of mortality improvements, but with different weights.

\begin{definition}\label{Dx}
	Let $\ell(x)$ be the probability of surviving from birth to age $x$. A measure of lifespan equality at age $x$ is Drewnowski's index conditional upon survival up to age $x$, defined as
	\begin{equation}
		D(x)=\frac{1}{\ell(x)}\,\frac{\int_x^\infty\ell(a)^2\,da}{\int_x^\infty\ell(a)\,da}\;.
		\label{eq:Dx}
	\end{equation}	
\end{definition}

\begin{proposition}
	Let $\vartheta=\int_0^\infty\ell(x)^2\,dx$, where $\ell(x)$ is the probability of surviving from birth to age $x$. Then, its partial derivative with respect to time can be expressed as 
	\begin{equation}
	\dot{\vartheta}=\int_0^\infty \rho(x)\,w(x)\,2\,\ell(x)\,D(x)\,dx\;,
	\label{der.vartheta}
	\end{equation}
	where $\rho(x)$ are the age-specific rates of mortality improvement, $w(x)$ the same weights defined in~\eqref{ex.derivative}, and $D(x)$ as defined in~\eqref{eq:Dx}.

	\label{prop1}
\end{proposition}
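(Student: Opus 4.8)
The plan is to start from $\vartheta=\int_0^\infty\ell(x)^2\,dx$ and differentiate under the integral sign, obtaining $\dot{\vartheta}=2\int_0^\infty\ell(x)\,\dot{\ell}(x)\,dx$. The key intermediate step is to rewrite $\dot{\ell}(x)$ in terms of the rates of mortality improvement. Writing the survival function as $\ell(x)=\exp\!\left(-\int_0^x\mu(a)\,da\right)$ and using $\dot{\mu}(a)=-\rho(a)\,\mu(a)$, which is just the definition of $\rho$ rearranged, I would differentiate to get $\dot{\ell}(x)=-\ell(x)\int_0^x\dot{\mu}(a)\,da=\ell(x)\int_0^x\rho(a)\,\mu(a)\,da$.

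Substituting this into the expression for $\dot{\vartheta}$ gives $\dot{\vartheta}=2\int_0^\infty\ell(x)^2\left(\int_0^x\rho(a)\,\mu(a)\,da\right)dx$. The crucial manoeuvre is then to interchange the order of integration over the triangular region $0\le a\le x<\infty$, which produces $\dot{\vartheta}=2\int_0^\infty\rho(a)\,\mu(a)\left(\int_a^\infty\ell(x)^2\,dx\right)da$. Relabelling the outer variable as $x$, the remaining goal is to show that the inner factor matches the target weights, i.e.\ that $w(x)\,\ell(x)\,D(x)=\mu(x)\int_x^\infty\ell(a)^2\,da$, after which the overall factor of $2$ yields~\eqref{der.vartheta} directly.

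To close the argument I would unfold the two definitions. Since $e(x)=\int_x^\infty\ell(a)\,da\,/\,\ell(x)$, the weight factors as $w(x)=\mu(x)\,\ell(x)\,e(x)=\mu(x)\int_x^\infty\ell(a)\,da$; and from~\eqref{eq:Dx}, $\ell(x)\,D(x)=\int_x^\infty\ell(a)^2\,da\,/\,\int_x^\infty\ell(a)\,da$. Multiplying these two, the factors $\int_x^\infty\ell(a)\,da$ cancel and leave exactly $\mu(x)\int_x^\infty\ell(a)^2\,da$, which coincides with the integrand obtained after the interchange of integration and establishes the claim.

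I expect the main obstacle to be bookkeeping rather than anything deep: the genuinely load-bearing step is the interchange of the order of integration, whose justification rests on the non-negativity of the integrand $\rho(a)\,\mu(a)\,\ell(x)^2$ via Tonelli's theorem (or on integrability assumptions on the mortality schedule that also license differentiation under the integral sign and guarantee $\vartheta<\infty$). The remaining work is the purely algebraic recognition that the product $w(x)\,\ell(x)\,D(x)$ was defined precisely so as to collapse to $\mu(x)\int_x^\infty\ell(a)^2\,da$, so once the Fubini step is in hand the identification with~\eqref{der.vartheta} is immediate.
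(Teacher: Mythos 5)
Your proposal is correct and follows essentially the same route as the paper's proof: differentiate under the integral sign, use $\dot{\ell}(x)=-\ell(x)\int_0^x\dot{\mu}(a)\,da$, swap the order of integration over the triangular region, and recognize that $w(x)\,\ell(x)\,D(x)=\mu(x)\int_x^\infty\ell(a)^2\,da$. The only differences are cosmetic (you substitute $\rho$ before the Fubini step and make the Tonelli justification explicit, which the paper leaves implicit).
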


\begin{proof}
	Applying the chain rule, the derivative of $\vartheta$ with respect to time is simply 
	
	$$
	\dot{\vartheta} = \int_0^\infty2\,\ell(x)\,\dot{\ell}(x)\,dx\;.
	$$
	Using that $\dot{\ell}(x)=-\ell(x)\int_0^x\dot{\mu}(a)\,da$, and reversing the order of integration, we get

	\begin{equation*}
	\begin{split}
	\dot{\vartheta}	
		& =-2\int_0^\infty \ell(x)^2 \,\int_0^x\dot{\mu}(a)\,da\,dx=-2\int_0^\infty\dot{\mu}(a)\int_a^\infty\ell(x)^2\,dx\,da	\\
		& = 2\int_0^\infty \rho(x)\,\mu(x)\,\ell(x)\,e(x)\,\frac{\int_x^\infty\ell(a)^2\,da}{\int_x^\infty\ell(a)\,da}\,dx 		\\
		& = \int_0^\infty \rho(x)\,w(x)\,2\,\ell(x)\,D(x)\,dx\;,
	\label{der.vartheta2}
	\end{split}
	\end{equation*}
	where $w(x)=\mu(x)\,\ell(x)\,e(x)$, which proves~\eqref{der.vartheta}.
\end{proof}

\subsection{Changes over time in $D$ in terms of mortality improvements}

Equations~\eqref{ex.derivative} and~\eqref{der.vartheta} enable us to express changes over time in $D$ in terms of mortality improvements. Using~\eqref{Drew_definition} and subsequently replacing~\eqref{ex.derivative} and~\eqref{der.vartheta} in~\eqref{changes_lambda} yields

\begin{align}
  \label{eq.changes_D}
  \dot{D} \nonumber
    & = D\,\left(\frac{\dot{\vartheta}}{\vartheta}-\frac{\dot{e}_o}{e_o}\right)=\frac{\dot{\vartheta}-\dot{e}_o D}{e_o}							\\ \nonumber
    & = \int_0^\infty\rho(x)\,w(x)\,\frac{2\,\ell(x)\,D(x)-D}{e_o}\,dx 					\\
    & = \int_0^\infty \rho(x)\,w(x)\,W(x)\,dx\;.
\end{align}
This result shows that changes over time in $D$ (and analogously in $G$) are a total average of mortality improvements weighted by $w(x)\,W(x)$, where $w(x)=\mu(x)\,\ell(x)\,e(x)$ are the same weights as in~\eqref{ex.derivative} and
\begin{equation}
	W(x)=\frac{2\,\ell(x)\,D(x)-D}{e_o}\;.
	\label{eq:weights}
\end{equation}


\section{The threshold age}
\label{sec:threshold}


\subsection{Positive and negative contributions to lifespan equality}

Because Drewnowski's index is a measure of equality, $\dot{D}>0$ indicates that lifespan variation decreases over time, whereas $\dot{D}<0$ implies that lifespan variation increases over time. Equation~\eqref{eq.changes_D} can then be used to analyze the existence of a threshold age that separates \emph{positive} from \emph{negative} contributions to lifespan equality as a result of mortality improvements.

Note that in the assumption that mortality improvements occur at all ages, $\rho(x)=-\dot{\mu}(x)\,/\,\mu(x)>0$ is a strictly positive function. Therefore, from~\eqref{eq.changes_D},
\begin{enumerate}
	\item Those ages $x$ for which $w(x)\,W(x)>0$ will contribute \emph{positively} to Drewnowski's index $D$ and increase lifespan equality;
	\item Those ages $x$ for which $w(x)\,W(x)<0$ will contribute \emph{negatively} to Drewnowski's index $D$ and favor lifespan inequality;
	\item Those ages $x$ for which $w(x)\,W(x)=0$ will have no effect on the variation over time of $D$.
\end{enumerate}

Any existing threshold age that separates positive from negative contributions to lifespan equality will occur whenever $w(x)\,W(x)=0$. Since $\mu(x)$, $\ell(x)$, and $e(x)$ are all positive functions, so are $w(x)$ and $e_o$. Hence,
\begin{equation}
	\label{eq:gfunc}
	w(x)\,W(x)=0\quad\Longleftrightarrow\quad 2\,\ell(x)\,D(x)-D=0\;.
\end{equation}

\subsection{Existence and uniqueness of the threshold age}

By means of the following two propositions and one theorem, we aim to prove that in a scenario in which mortality improvements occur at all ages and $\rho(x)>0$ for all $x\geq0$, there exists a unique threshold age $a^D$ that separates positive from negative contributions to lifespan equality (measured by $D$) as a result of those improvements.

\begin{remark}
	Following~\eqref{Drew_definition}, Drewnowski's index $D$ is bounded between 0 and 1, reaching a value of 1 at complete equality in the ages at death within a population. A score of 0 would express maximum inequality in the ages at death. By definition, however, this value can never be reached:
	\begin{equation}
	\label{eq:d0}
	D=0\,\Longleftrightarrow\,\frac{\int_0^\infty\ell(x)^2\,dx}{\int_0^\infty\ell(x)\,dx}=0\,\Longleftrightarrow\,\int_0^\infty\ell(x)^2\,dx=0\,\Longleftrightarrow\,\ell(x)=0
	\end{equation}
	for all ages $x\geq0$. This implies that the denominator in~\eqref{eq:d0} equals 0 because $\ell(x)\geq0$ is always positive and, consequently,  $D$ would be undefined. Hence, $0<D\leq1$.
\end{remark}

\begin{proposition}
	\label{prop3}
	Let $\ell(x)$ be the probability of surviving from birth to age $x$, $D$ Drewnowski's index as defined in~\eqref{Drew_definition}, and $D(x)$ as defined in~\eqref{eq:Dx}. Define the function $g(x):=2\,\ell(x)\,D(x)-D$. Then, there exists at least one age $a^D$ such that $g(a^D)=0$.
\end{proposition}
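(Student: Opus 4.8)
The plan is to establish existence via the Intermediate Value Theorem, so the work reduces to showing that $g$ is continuous and that it changes sign. First I would rewrite $g$ in a more tractable form. Observe that the conditional index satisfies
$$
\ell(x)\,D(x)=\frac{\int_x^\infty\ell(a)^2\,da}{\int_x^\infty\ell(a)\,da}\;,
$$
so that $g(x)=2\,\frac{\int_x^\infty\ell(a)^2\,da}{\int_x^\infty\ell(a)\,da}-D$. Written this way, $g$ is a ratio of two continuous functions—integrals of the continuous, nonnegative functions $\ell^2$ and $\ell$—whose denominator is strictly positive at every finite $x$ in the support of the distribution. Hence $g$ is continuous there.

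Next I would evaluate $g$ at the left endpoint. Since $\ell(0)=1$, the conditional index at birth coincides with the unconditional one, $D(0)=\frac{1}{\ell(0)}\,\frac{\int_0^\infty\ell(a)^2\,da}{\int_0^\infty\ell(a)\,da}=D$, and therefore
$$
g(0)=2\,\ell(0)\,D(0)-D=2D-D=D\;.
$$
By the preceding Remark, $0<D\leq1$, so $g(0)=D>0$.

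The key step is to control $g$ for large $x$. Because $\ell$ is non-increasing, for every $a\geq x$ one has $\ell(a)^2\leq\ell(x)\,\ell(a)$; integrating over $[x,\infty)$ gives $\int_x^\infty\ell(a)^2\,da\leq\ell(x)\int_x^\infty\ell(a)\,da$, and hence
$$
0\leq\ell(x)\,D(x)\leq\ell(x)\;.
$$
Since $\ell(x)\to0$ as $x$ tends to the right endpoint of the support, the squeeze theorem yields $\ell(x)\,D(x)\to0$, so $g(x)\to-D<0$. In particular there exists a point $x_1$ with $g(x_1)<0$.

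Finally, applying the Intermediate Value Theorem to the continuous function $g$ on the interval $[0,x_1]$, where $g(0)>0$ and $g(x_1)<0$, produces an age $a^D\in(0,x_1)$ with $g(a^D)=0$, as claimed. I expect the only genuine subtlety to be the limiting behavior at infinity: handling it through the monotonicity bound $\ell(x)\,D(x)\leq\ell(x)$ rather than an L'H\^opital argument is the cleanest route, since it sidesteps any concern about whether the survival function has bounded or unbounded support.
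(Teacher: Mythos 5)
Your proof is correct and follows the same overall skeleton as the paper's: show $g(0)=D>0$, show $g$ becomes negative for large $x$, and conclude by the intermediate value theorem. The one place where you genuinely diverge is the treatment of the limit of $\ell(x)\,D(x)$. The paper notes that $\int_x^\infty\ell(a)^2\,da$ and $\int_x^\infty\ell(a)\,da$ both tend to $0$ and resolves the resulting indeterminate form with L'H\^opital's rule, obtaining $\lim_{x\to\infty}\ell(x)\,D(x)=\lim_{x\to\infty}\ell(x)=0$. You instead use the monotonicity of the survival function: since $\ell(a)\leq\ell(x)$ for $a\geq x$, you get $\int_x^\infty\ell(a)^2\,da\leq\ell(x)\int_x^\infty\ell(a)\,da$, hence $0\leq\ell(x)\,D(x)\leq\ell(x)$, and the squeeze theorem finishes the job. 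Your route is more elementary and slightly more robust: it needs no differentiation of the tail integrals, no discussion of the indeterminate form, and, as you point out, it applies verbatim whether the support of the lifespan distribution is bounded or unbounded (one only needs $\ell\to0$ at the right endpoint of the support, which is forced by $e_o<\infty$ in the unbounded case and by continuity in the bounded case). The paper's L'H\^opital computation buys nothing extra here, so your version is arguably the cleaner write-up of the same existence argument.
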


\begin{proof}
	At age $x=0$,

	\begin{equation}
		\label{eq:limitlow}
		g(0)=2\,\ell(0)\,D(0)-D=2\,D-D=D>0
	\end{equation}
	by definition, since $0<D\leq1$.
	
	When ages become arbitrarily large,
	$$
	\lim_{x\to\infty}g(x)=\lim_{x\to\infty}(2\,\ell(x)\,D(x)-D)=2\lim_{x\to\infty}\ell(x)\,D(x)-D\;,
	$$
	which only depends on the behavior of $\ell(x)\,D(x)$. Because $\ell(x)\in[0,1]$ for all ages $x\geq0$, we have that $0\leq\ell(x)^2\leq\ell(x)$ and 

	$$
	0\leq\lim_{x\to\infty}\int_x^\infty\ell(a)^2\,da\leq\lim_{x\to\infty}\int_x^\infty\ell(a)\,da=\lim_{x\to\infty}e(x)\,\ell(x)=0\;,
	$$
	where $e(x)$ is the remaining life expectancy at age $x$. This proves that both integrals tend to 0 as $x$ approaches~$\infty$. Consequently,

	$$
	\lim_{x\to\infty}\ell(x)\,D(x)=\lim_{x\to\infty}\frac{\int_x^\infty\ell(a)^2\,da}{\int_x^\infty\ell(a)\,da}
	$$
	is indeterminate, but applying L'H\^opital's rule, we get

	$$
	\lim_{x\to\infty}\ell(x)\,D(x)=\lim_{x\to\infty}\frac{\frac{\partial}{\partial x}\,\int_x^\infty\ell(a)^2\,da}{\frac{\partial}{\partial x}\,\int_x^\infty\ell(a)\,da}=\lim_{x\to\infty}\frac{-\ell(x)^2}{-\ell(x)}=\lim_{x\to\infty}\ell(x)=0\;.
	$$

	As a result, 

	\begin{equation}
		\label{eq:limitup}
		\lim_{x\to\infty}g(x)=2\lim_{x\to\infty}\ell(x)\,D(x)-D=0-D<0\;.
	\end{equation}
	Finally, using~\eqref{eq:limitlow} and~\eqref{eq:limitup}, on a continuous framework the intermediate value theorem guarantees the existence of at least one positive age $a^D$ at which $g(a^D)=0$.
\end{proof}

\begin{proposition}
	\label{prop4}
	Let $\ell(x)$ be the probability of surviving from birth to age $x$, $D$ Drewnowski's index as defined in~\eqref{Drew_definition}, and $D(x)$ as defined in~\eqref{eq:Dx}. Then, $g(x):=2\,\ell(x)\,D(x)-D$ is a strictly decreasing function. 
\end{proposition}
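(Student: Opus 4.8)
The plan is to differentiate $g$ directly and determine the sign of $g'(x)$. The first simplification is to notice that the constant $D$ drops out under differentiation, and that $\ell(x)\,D(x)$ collapses to a ratio of tail integrals: writing $A(x)=\int_x^\infty\ell(a)^2\,da$ and $B(x)=\int_x^\infty\ell(a)\,da$, definition~\eqref{eq:Dx} gives $\ell(x)\,D(x)=A(x)/B(x)$, so that $g(x)=2\,A(x)/B(x)-D$ and hence $g'(x)=2\,\frac{d}{dx}\!\left(A(x)/B(x)\right)$. It therefore suffices to show that the ratio $A/B$ is strictly decreasing.

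First I would compute the derivative by the quotient rule, using the fundamental theorem of calculus to obtain $A'(x)=-\ell(x)^2$ and $B'(x)=-\ell(x)$. This yields
\begin{equation*}
g'(x)=2\,\frac{A'(x)\,B(x)-A(x)\,B'(x)}{B(x)^2}=\frac{2\,\ell(x)}{B(x)^2}\,\bigl(A(x)-\ell(x)\,B(x)\bigr)\;.
\end{equation*}
Since $\ell(x)>0$ and $B(x)^2>0$ on the range where $\ell$ is positive, the sign of $g'(x)$ is governed entirely by the bracketed term $A(x)-\ell(x)\,B(x)$. Equivalently---and this gives the cleanest interpretation---the claim $g'(x)<0$ reduces to showing that $D(x)=A(x)/\bigl(\ell(x)\,B(x)\bigr)<1$ for every $x$, i.e.\ that the conditional Drewnowski index stays strictly below its maximal value.

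The remaining and decisive step is to establish the strict inequality $A(x)<\ell(x)\,B(x)$. I would rewrite the difference as a single integral,
\begin{equation*}
A(x)-\ell(x)\,B(x)=\int_x^\infty\ell(a)\,\bigl(\ell(a)-\ell(x)\bigr)\,da\;,
\end{equation*}
and then exploit monotonicity of the survival function: for every $a>x$ one has $\ell(a)\le\ell(x)$, so the integrand is nonpositive and $g'(x)\le0$ follows immediately. The main obstacle is upgrading this to a \emph{strict} inequality. The key point is that $\ell$ is not merely nonincreasing but strictly decreasing wherever it is positive (equivalently $\mu(x)>0$), so $\ell(a)-\ell(x)<0$ on a set of $a>x$ of positive measure on which $\ell(a)>0$; the integrand is then strictly negative there and the integral is strictly negative. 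I would state explicitly the mild regularity assumption on $\ell$ (positivity and strict decrease on the support) under which this holds, since that is precisely what rules out the degenerate case of a constant survival plateau that would make $g$ locally flat. Combining these observations gives $g'(x)<0$ throughout, so $g$ is strictly decreasing.
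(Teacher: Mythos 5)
Your proposal is correct, and its computational skeleton coincides with the paper's: both apply the quotient rule and the fundamental theorem of calculus to $\ell(x)\,D(x)=\int_x^\infty\ell(a)^2\,da\,\big/\int_x^\infty\ell(a)\,da$ and reduce the claim to the strict inequality $\int_x^\infty\ell(a)^2\,da<\ell(x)\int_x^\infty\ell(a)\,da$. Where you genuinely diverge is in the decisive last step. The paper divides through by $\ell(x)^2$ and reads the inequality as $\tilde{e}(x)<e(x)$, where $\tilde{e}(x)$ is the remaining life expectancy under the doubled hazard $2\,\mu(x)$; it then asserts, as a demographic fact requiring only $\mu(x)>0$, that doubling the hazard lowers remaining life expectancy. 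You instead prove the inequality directly by collapsing the difference into the single integral $\int_x^\infty\ell(a)\,\bigl(\ell(a)-\ell(x)\bigr)\,da$ and invoking the (strict) monotonicity of the survival function. Your route is more elementary and self-contained --- it makes explicit exactly which regularity is needed (strict decrease of $\ell$ on its support, ruling out a survival plateau), a point the paper leaves implicit in its appeal to intuition --- while the paper's route buys an interpretable statement (conditional Drewnowski index equals the ratio of a ``doubled-hazard'' life expectancy to the ordinary one, hence below $1$) that connects back to the meaning of $\vartheta$ in~\eqref{Gini_definition}. Your observation that the whole proposition is equivalent to $D(x)<1$ for all $x$ is a nice bridge between the two viewpoints.
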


\begin{proof}
	In order to demonstrate that $g(x)$ is a strictly decreasing function it suffices to show that its first derivative is negative for all ages $x\geq0$. Note that since $D$ does not depend on age,
	
	$$
	\frac{\partial}{\partial x}\,g(x)<0\quad\Longleftrightarrow\quad\frac{\partial}{\partial x}\,\big(\ell(x)\,D(x)\big)<0\;.
	$$

	Applying the quotient rule together with the fundamental theorem of calculus, we get

	\begin{equation*}
		\begin{split}
		\frac{\partial}{\partial x}\,\big(\ell(x)\,D(x)\big) & = \frac{\partial}{\partial x}\,\left(\frac{\int_x^\infty\ell(a)^2\,da}{\int_x^\infty\ell(a)\,da}\right) \\ & = \frac{\int_x^\infty\ell(a)\,da\,\frac{\partial}{\partial x}\,\left(\int_x^\infty\ell(a)^2\,da\right) - \int_x^\infty\ell(a)^2\,da\,\frac{\partial}{\partial x}\,\left(\int_x^\infty\ell(a)\,da\right)}{\left(\int_x^\infty\ell(a)\,da\right)^2} \\ & =
		\frac{\int_x^\infty\ell(a)\,da\,\left(-\ell(x)^2\right) - \int_x^\infty\ell(a)^2\,da\,\left(-\ell(x)\right)}{\left(\int_x^\infty\ell(a)\,da\right)^2}\,.
		\end{split}
	\end{equation*}
	Hence,

	\begin{equation*}
		\begin{split}
		\frac{\partial}{\partial x}\,g(x)<0
			& \,\Longleftrightarrow\,\ell(x)\int_x^\infty\ell(a)^2\,da-\ell(x)^2\int_x^\infty\ell(a)\,da<0		\\
			& \,\Longleftrightarrow\,\frac{1}{\ell(x)^2}\int_x^\infty\ell(a)^2\,da < \frac{1}{\ell(x)}\int_x^\infty\ell(a)\,da\;.
		\end{split}
	\end{equation*}
	
	Note that $\ell(x)=\exp\left[-\int_0^x\mu(a)\,da\right]$ for a given age-specific hazard function $\mu(x)$. Therefore, $\ell(x)^2=\exp\left[-\int_0^x2\,\mu(a)\,da\right]$	can be interpreted as the survival schedule with doubling hazard $2\,\mu(x)$ at all ages $x\geq0$. We can then define $\tilde{e}(x)=\int_x^\infty\ell(a)^2\,da\,/\,\ell(x)^2$ as the remaining life expectancy at age $x$ of a population with survival schedule $\ell(x)^2$ and age-specific force of mortality $2\,\mu(x)$. Accordingly, 

	$$
	\frac{\partial}{\partial x}\,g(x)<0\,\Longleftrightarrow\, 
	\frac{1}{\ell(x)^2}\int_x^\infty\ell(a)^2\,da < \frac{1}{\ell(x)}\int_x^\infty\ell(a)\,da\,\Longleftrightarrow\,\tilde{e}(x)<e(x)
	$$
	for all $x\geq0$, which holds true since doubling the hazard corresponds to a lower remaining life expectancy, in the reasonable assumption that $\mu(x)>0$ for all ages.
\end{proof}

\begin{theorem}
	Let $D=\vartheta\,/\,e_o$ be Drewnowski's index, where $\vartheta=\int_0^\infty\ell(x)^2\,dx$, $e_o=\int_0^\infty\ell(x)\,dx$ is the life expectancy at birth, and $\ell(x)$ the probability of surviving from birth to age $x$. Assume mortality improvements over time occur at all ages. Then, there exists a unique threshold age $a^D$ that separates positive from negative contributions to lifespan equality, measured by $D$, as a result of those improvements.
\end{theorem}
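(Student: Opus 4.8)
The plan is to assemble the theorem directly from the two preceding propositions, since together they already carry all the analytic weight. Proposition~\ref{prop3} guarantees that the function $g(x)=2\,\ell(x)\,D(x)-D$ vanishes at \emph{at least} one age $a^D\geq0$, while Proposition~\ref{prop4} establishes that $g$ is strictly decreasing on $[0,\infty)$. Existence of a threshold is therefore immediate, and uniqueness will follow from strict monotonicity: a strictly decreasing continuous function attains any prescribed value at most once, so the zero supplied by Proposition~\ref{prop3} must be the only one.

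First I would reduce the sign of the contributions in the derivative~\eqref{eq.changes_D} to the sign of $g$. By~\eqref{eq:weights} we have $W(x)=g(x)/e_o$, and since both $w(x)=\mu(x)\,\ell(x)\,e(x)$ and $e_o$ are strictly positive, the integrand $\rho(x)\,w(x)\,W(x)$ carries the same sign as $\rho(x)\,g(x)$. Under the hypothesis that mortality improvements occur at all ages we have $\rho(x)>0$ throughout, so the contribution at age $x$ has exactly the sign of $g(x)$. This is precisely the equivalence already recorded in~\eqref{eq:gfunc}, and it is what makes a zero of $g$ meaningful as a threshold rather than a mere root of an auxiliary function.

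Next I would invoke the monotonicity to fix the structure on either side of the root. Because $g$ is strictly decreasing with $g(0)=D>0$ and $\lim_{x\to\infty}g(x)=-D<0$, there is a unique $a^D$ with $g(a^D)=0$, and moreover $g(x)>0$ for every $x<a^D$ and $g(x)<0$ for every $x>a^D$. Translating back through the sign analysis, all ages below $a^D$ contribute positively to $D$ (increasing lifespan equality) while all ages above $a^D$ contribute negatively (favoring inequality). This exhibits $a^D$ as the unique age separating the two regimes, which is exactly the assertion of the theorem.

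Since Propositions~\ref{prop3} and~\ref{prop4} do the genuine work, I expect no real obstacle in this argument. The only point demanding care is the logical articulation: making explicit that strict monotonicity upgrades the ``at least one'' of Proposition~\ref{prop3} to ``exactly one,'' and that it is the definite sign of $g$ on each side of $a^D$---rather than the bare fact that $g(a^D)=0$---that endows the root with its interpretation as a threshold between positive and negative contributions to lifespan equality.
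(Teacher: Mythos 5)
Your proposal is correct and follows essentially the same route as the paper's proof: reduce the sign of the integrand in~\eqref{eq.changes_D} to the sign of $g(x)=2\,\ell(x)\,D(x)-D$ via~\eqref{eq:gfunc}, then combine the existence from Proposition~\ref{prop3} with the strict monotonicity from Proposition~\ref{prop4} to get uniqueness. Your added remark that $g>0$ below $a^D$ and $g<0$ above it makes the ``separating'' claim slightly more explicit than the paper does, but the argument is the same.
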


\begin{proof}
	Following~\eqref{eq.changes_D}, changes over time in $D$ can be expressed as a weighted average of mortality improvements, given by

	$$
	\dot{D}=\int_0^\infty\rho(x)\,w(x)\,W(x)\,dx\;,
	$$
	where $\rho(x)$ are the age-specific rates of mortality improvement over time, and $w(x)\,W(x)$ the weights. By assumption, $\rho(x)>0$ for all ages $x\geq0$. Therefore, any threshold age that separates positive from negative contributions to lifespan equality as a result of mortality improvements will occur whenever $w(x)\,W(x)=0$. From~\eqref{eq:gfunc},

	$$
	w(x)\,W(x)=0\quad\Longleftrightarrow\quad 2\,\ell(x)\,D(x)-D=0\;,
	$$
	where $D(x)$ is as defined in~\eqref{eq:Dx}. Proposition~\ref{prop3} guarantees the existence of at least one positive age $a^D$ such that $2\,\ell\left(a^D\right)\,D\left(a^D\right)-D=0$. In addition, from Proposition~\ref{prop4} the function $g(x):=2\,\ell(x)\,D(x)-D$ is strictly decreasing. Hence, assuming continuity, $g(x)$ is a one-to-one function and therefore the threshold age $a^D$ is unique. 
\end{proof}

\begin{corollary}
    
    Let $G$ be the Gini coefficient as defined in~\eqref{Gini_definition}. Provided that $G=1-D$, following~\eqref{eq.changes_D}
    
    $$
        \dot{G}=-\dot{D}=-\int_0^\infty \rho(x)\,w(x)\,W(x)\,dx\;.
    $$
    Hence, $G$ and $D$ have the same threshold age, which is unique in the assumption that mortality improvements occur at all ages. Improvements below the threshold age will reduce lifespan variation (lowering $G$, but enlarging $D$), and improvements above will increase lifespan inequality (enlarging $G$, but lowering $D$).    
 
\end{corollary}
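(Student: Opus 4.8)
The plan is to treat this as a direct consequence of the Theorem together with the algebraic identity $G = 1 - D$ recorded in~\eqref{Drew_definition}. Since the constant $1$ is independent of time, differentiating $G = 1 - D$ with respect to time immediately gives $\dot{G} = -\dot{D}$. Substituting the expression for $\dot{D}$ already obtained in~\eqref{eq.changes_D} then yields
$$
\dot{G} = -\dot{D} = -\int_0^\infty \rho(x)\,w(x)\,W(x)\,dx\,,
$$
which is the first assertion. No new integration or limiting argument is required; the work is entirely the bookkeeping of a single sign.

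For the claim that $G$ and $D$ share the same threshold age, I would observe that the two integrands differ only by an overall factor of $-1$. The threshold age is, by definition, the age at which the weight $w(x)\,W(x)$ changes sign, equivalently where $w(x)\,W(x) = 0$. Since $-w(x)\,W(x) = 0$ precisely when $w(x)\,W(x) = 0$, the function governing $\dot{G}$ has exactly the same zero set as the one governing $\dot{D}$. The Theorem has already established that, under the assumption $\rho(x) > 0$ for all $x \ge 0$, this zero set consists of the single point $a^D$; because the zero sets coincide, the threshold age for $G$ is also $a^D$ and is likewise unique.

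Finally, to establish the directional statements I would combine the monotonicity of Proposition~\ref{prop4} with this sign reversal. Proposition~\ref{prop4} shows that $g(x) = 2\,\ell(x)\,D(x) - D$ is strictly decreasing, while Proposition~\ref{prop3} and the Theorem locate its unique zero at $a^D$; hence $g(x) > 0$ for $x < a^D$ and $g(x) < 0$ for $x > a^D$. Because $w(x) > 0$ and $e_o > 0$, the weight $w(x)\,W(x)$ inherits the sign of $g(x)$ through~\eqref{eq:gfunc}. For ages below $a^D$ the contribution to $\dot{D}$ is positive (recall $\rho(x) > 0$), so improvements there raise $D$ and, by $\dot{G} = -\dot{D}$, lower $G$; since $D$ measures equality and $G$ measures inequality, this is precisely a reduction in lifespan variation. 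For ages above $a^D$ the contribution is negative, lowering $D$ and raising $G$, i.e.\ increasing lifespan inequality. The argument is short because the analytic difficulty has already been absorbed into Propositions~\ref{prop3} and~\ref{prop4} and the Theorem; the only genuine obstacle here is stating the sign reversal cleanly and confirming that negating the integrand leaves the location of its zero unchanged.
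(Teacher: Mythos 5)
Your proposal is correct and follows essentially the same route as the paper: the corollary there is justified exactly by differentiating $G = 1 - D$ to get $\dot{G} = -\dot{D}$, substituting~\eqref{eq.changes_D}, and observing that negating the integrand leaves its zero set, and hence the threshold age established in the Theorem, unchanged. Your additional step of invoking Proposition~\ref{prop4} to pin down the signs of $g(x)$ on either side of $a^D$ just makes explicit the directional claims that the paper asserts using the same underlying machinery.
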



\section{Application}
\label{sec:application}

We illustrate our theoretical findings simulating scenarios of mortality improvement depending on level versus rate changes in the Gompertz mortality model  \citep{gompertz1825}. This model captures an exponential change of the force of mortality $\mu(x)=\alpha\cdot\mathrm{e}^{\beta\,x}$ over age $x$, where $\alpha$ is the baseline level of mortality and $\beta$ the rate of aging. The R code \citep{Rsoftware} to reproduce all the figures and results discussed in the following will be publicly available upon acceptance.

\subsection{Gompertz force of mortality with positive aging: level vs rate change}

Our first scenario demonstrates how improvements in mortality affect our outcome variables when progress results from reducing the level of mortality, but not from slowing the rate of aging. Figure~\ref{Fig_1} depicts the Gompertz mortality function on a log-scale (Panel~A), the corresponding age-at-death distribution (Panel~B), and associated weights $W(x)$ (Eq.~\ref{eq:weights}) (Panel~C), for a fixed rate of aging $\beta =0.1$ and changing levels of baseline mortality $\alpha$. Values of Drewnowski's index $D$ and corresponding threshold ages $a^D$ are reported in Panels~B) and~C).

\begin{figure}[ht]
	\begin{center}
		\includegraphics[width=\textwidth]{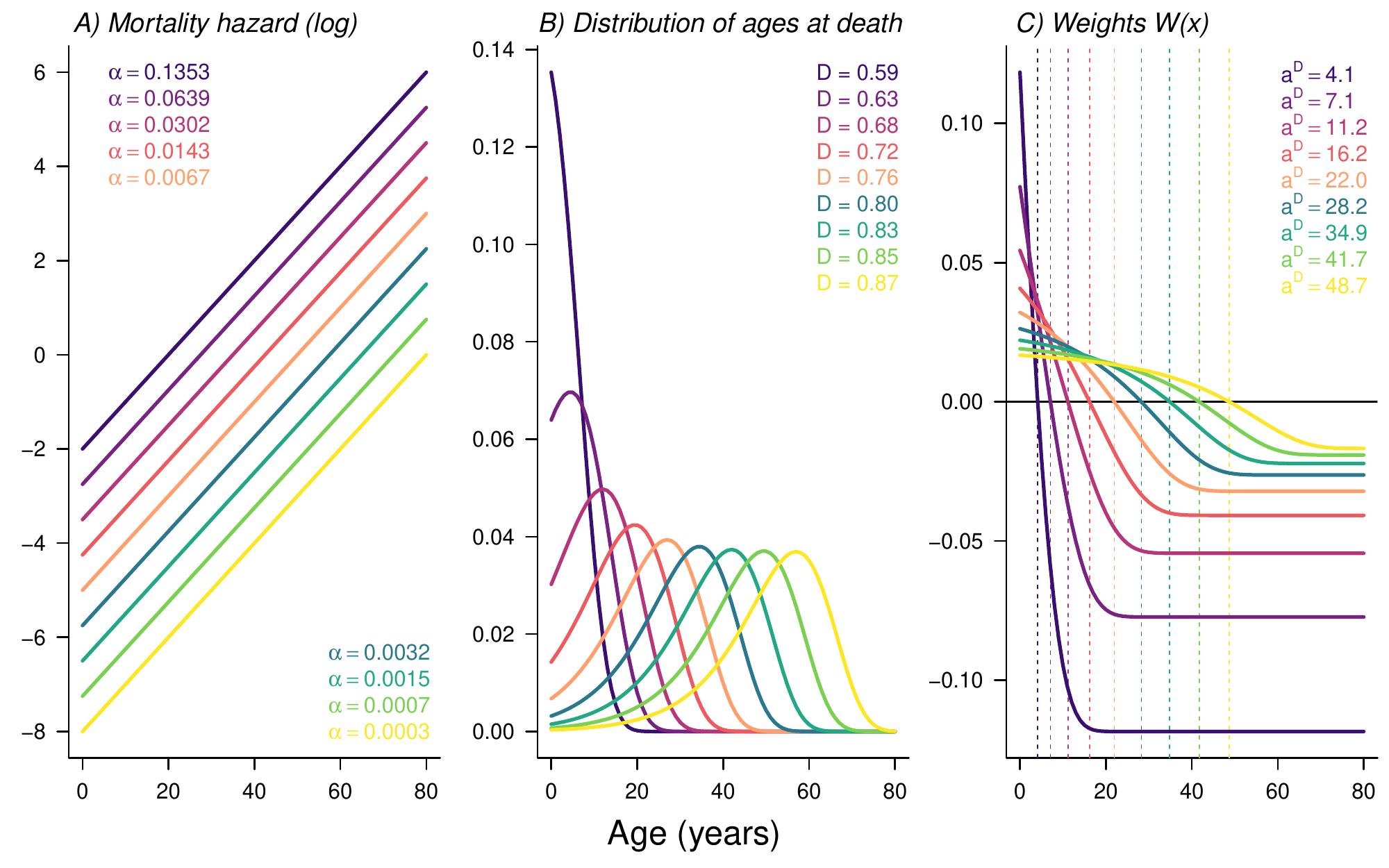}
		\caption{Gompertz mortality model with positive aging $(\beta>0)$ for different levels of baseline mortality $\alpha$ and a fixed rate of aging $\beta =0.1$}
		\label{Fig_1}
	\end{center}
\end{figure}

Lower levels of baseline mortality $\alpha$ translate into vertical downward shifts in the log-hazard (Panel~A). Correspondingly, lifespan variation falls as deaths concentrate at older ages, as captured by larger values of  $D$ (Panel~B). Lower levels of mortality raise the threshold age $a^D$ at which the weights $W(x)$ switch sign from positive to negative (Panel~C). Positive weights at some age imply that saving lives at that age increases equality, i.e. enlarging $D$; negative weights imply that saving lives decreases equality, i.e. lessening $D$. The threshold age  $a^D$ marks the boundary between these positive and negative effects of life saving on lifespan equality and is indicated by the dashed vertical lines. Panel~C) shows that under the assumption of mortality improvements over all ages, this threshold age is unique and that the magnitude of the weights $W(x)$ over age changes depending on the different shapes of the distribution of deaths, supporting our theoretical results. 

Complementing  these findings, Fig.~\ref{Fig_2} analyzes how slowing the rate of aging $\beta$ affects lifespan variation. For a fixed baseline level of mortality $\alpha=\mathrm{e}^{-5}\approx0.0067$, Panel~A) illustrates how lower values of $\beta$ translate, as expected,  into a slower rise in the log-hazard. In contrast to the first scenario, Panel~B) reveals that slowing the rate of aging  $\beta$ reduces $D$ (lighter colors) and widens the distribution of ages at death. That is, extending lifespan by slowing the rate of aging increases lifespan inequality, which is consistent with previous research by \cite{ColcherEtAl2021}. The threshold age $a^D$ increases as lifespan equality decreases. While life saving has opposite effects on lifespan variability and the level of the threshold age in the two scenarios, the general dynamics of the weights $W(x)$ remain similar. Saving lives before the threshold age reduces inequality; saving lives later increases inequality. The threshold age $a^D$ remains unique, independent of specific values of the mortality parameters, further supporting our theoretical predictions. 

\begin{figure}[ht]
	\begin{center}
		\includegraphics[width=\textwidth]{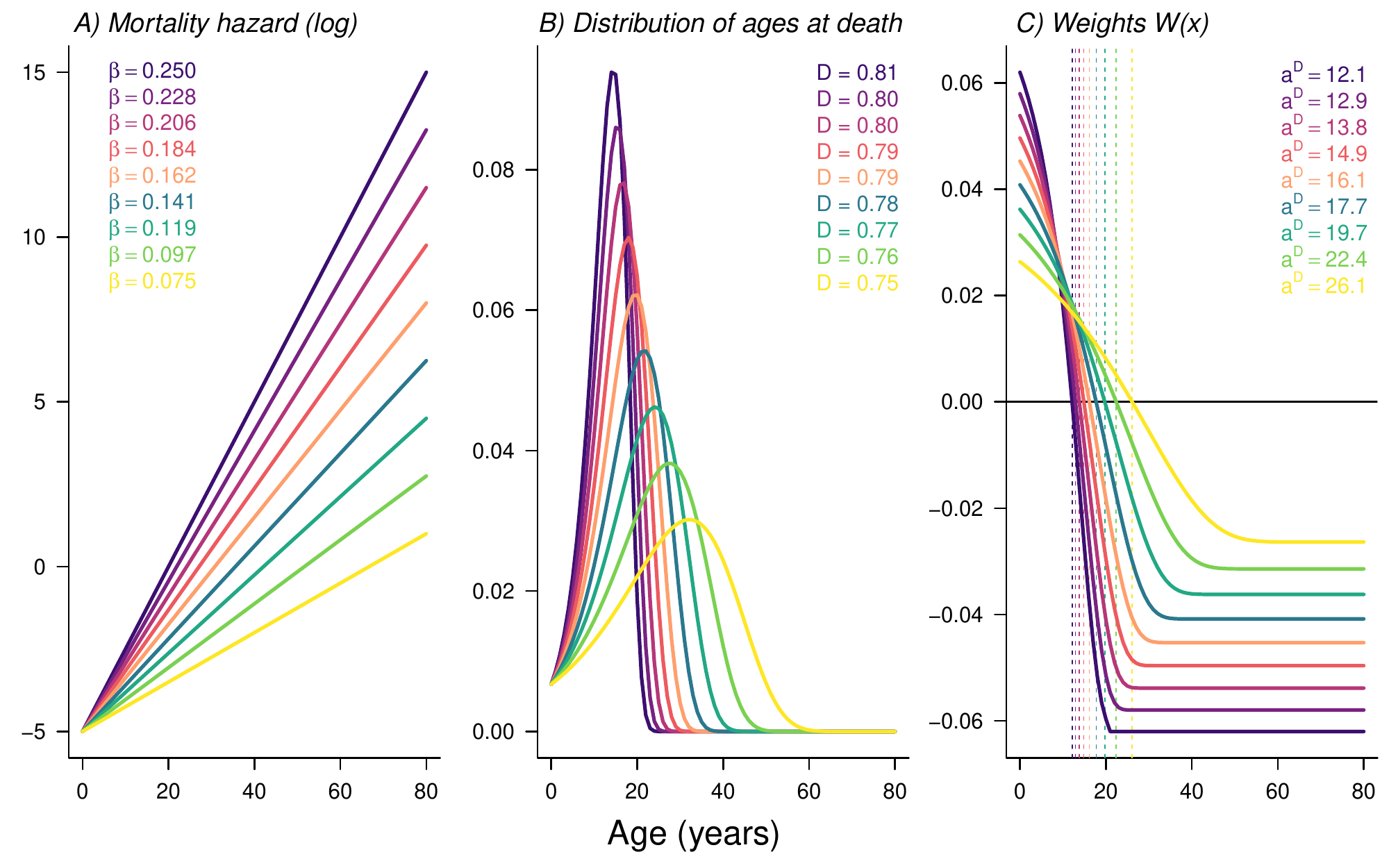}
		\caption{Gompertz mortality model with positive aging for different rates of aging $\beta>0$ and a fixed level of mortality $\alpha=\mathrm{e}^{-5}$}
		\label{Fig_2}
	\end{center}
\end{figure}

\subsection{Zero rate of aging}

Figure~\ref{Fig_3} captures a scenario in which the rate of aging equals zero. Under varying but constant levels of mortality (Panel~A), lifespan variation remains unchanged at $D=0.5$ and the largest number of deaths occur at the earliest ages (Panel~B). The lower the level of mortality, the longer the length of life, and the later the threshold age (Panel~C). 

This can also be verified analytically. When $\beta=0$, the survival function for the Gompertz mortality model simplifies to $\ell(x)=\mathrm{e}^{-\alpha\,x}$. Then, life expectancy as the integral of the survivorship over all ages becomes $1/\alpha$, which is the inverse of the force of mortality. Following Definition~\ref{Dx}, in this scenario the conditional Drewnowski index $D(x)$ is

$$
D(x)=\frac{1}{\mathrm{e}^{-\alpha\,x}}\,\frac{\mathrm{e}^{-2\,\alpha\,x}\,/\,2\,\alpha}{\mathrm{e}^{-\alpha\,x}\,/\,\alpha}=0.5\;.
$$
In other words, when mortality is constant over age, $D(x)=0.5$ for all ages and it is independent of the baseline mortality level $\alpha$. Accordingly, from Proposition~\ref{prop3}

$$
	g(x)=0\Longleftrightarrow 2\,\mathrm{e}^{-\alpha\,x}\,0.5-0.5=0\Longleftrightarrow x=\ln(2)\,/\,\alpha\;,
$$
which implies that the threshold age is $a^D=\ln(2)\,\,/\alpha$ and  $\ell(a^D)=\mathrm{e}^{-\alpha\,a^D}=0.5$. This means that, when mortality is constant, the threshold age and the median age at death are the same.
\smallskip

\begin{figure}[ht]
	\begin{center}
		\includegraphics[width=\textwidth]{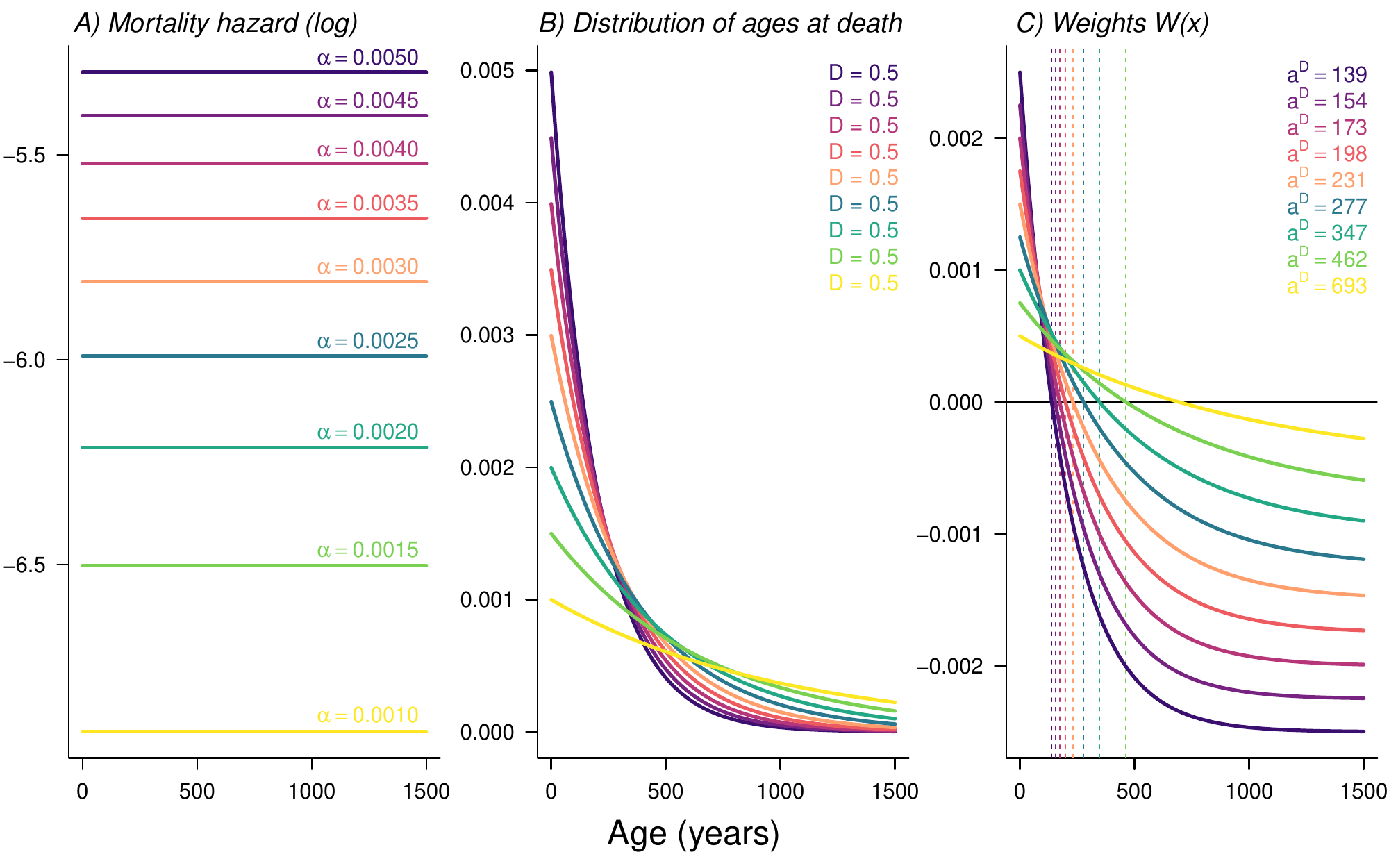}
		\caption{Gompertz mortality model with zero rate of aging $(\beta = 0)$ and different levels of baseline mortality $\alpha$}
		\label{Fig_3}
	\end{center}
\end{figure}

\subsection{Negative rate of aging}

To complete the analyses, Figs.~\ref{Fig_4} and~\ref{Fig_5} capture scenarios of decreasing mortality, which are consistently marked by values of $D$ below one half (Panels~B). Different to the scenario with positive aging rates, here effects of both level and rate changes point in the same direction: higher mortality results in less lifespan variation (higher values of $D$ in Panels~B). Even though the value of the Drewnowski's index varies substantially (from $0.03$ to $0.48$), the threshold age barely changes and remains close to the earliest age (Panels~C). This is because in these scenarios most individuals die young, and life expectancy is extremely short for all cases.

The two negative aging scenarios substantially differ, however, in the maximum age for the survivors (not shown). This reflects the exceeding disparity among the majority of individuals who die upon---or shortly after---birth, and those few surviving to manifold higher ages as they benefit from the improvements in mortality over age. These differences are also reflected in the density functions (Panels~B). While varying the rate of aging leaves no distinguishable effect on the density function (Fig.~\ref{Fig_5}, Panel~B), varying the baseline level of mortality $\alpha$ distinctly affects the shape of the density function with lower mortality levels yielding an increasing rectangularization. 

\begin{figure}[ht]
	\begin{center}
		\includegraphics[width=\textwidth]{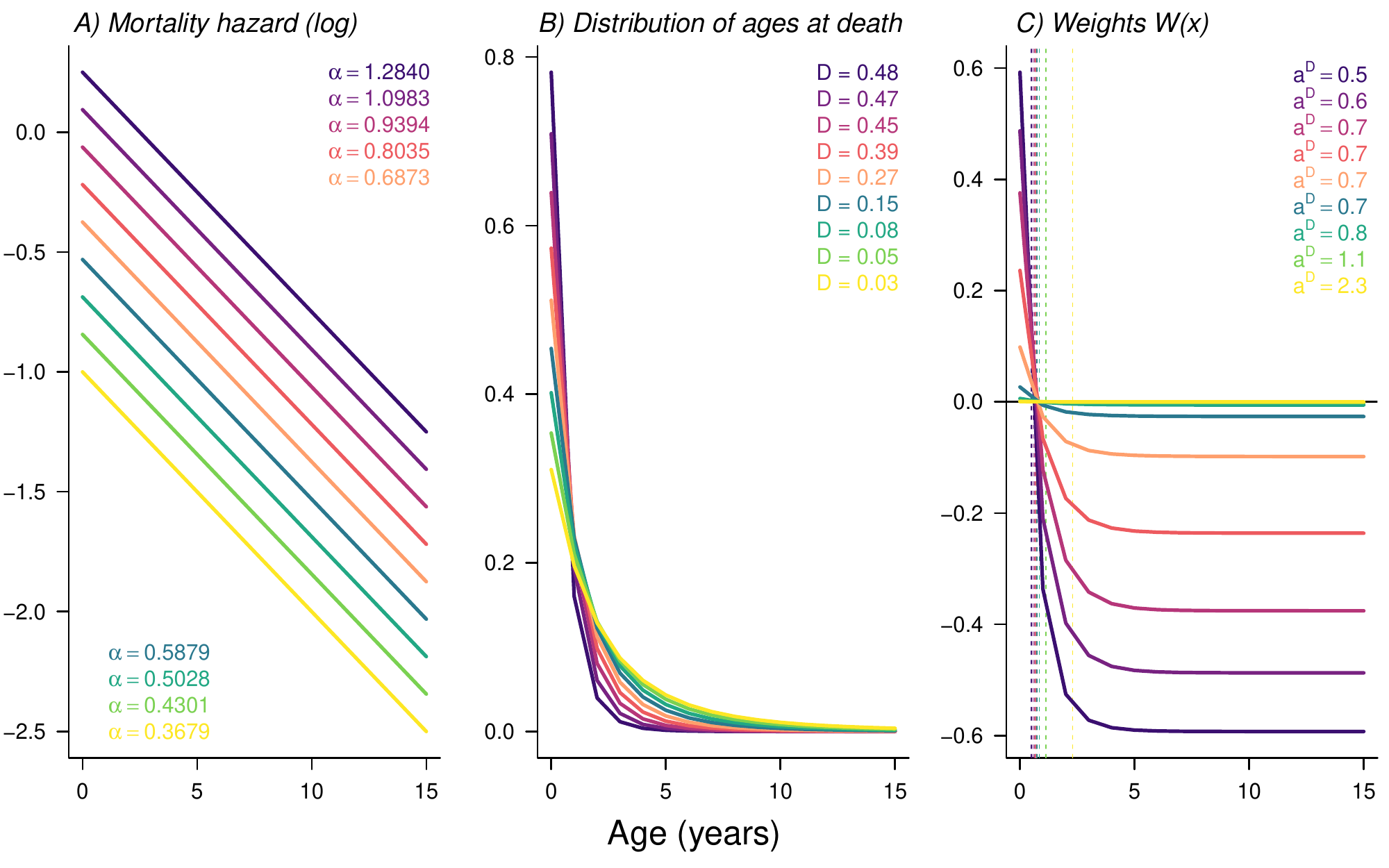}
		\caption{Gompertz mortality model with negative rate of aging $(\beta<0)$ for different levels of baseline mortality $\alpha$ and a fixed rate of aging $\beta = -0.1$}
		\label{Fig_4}
	\end{center}
\end{figure}

\begin{figure}[ht]
	\begin{center}
		\includegraphics[width=\textwidth]{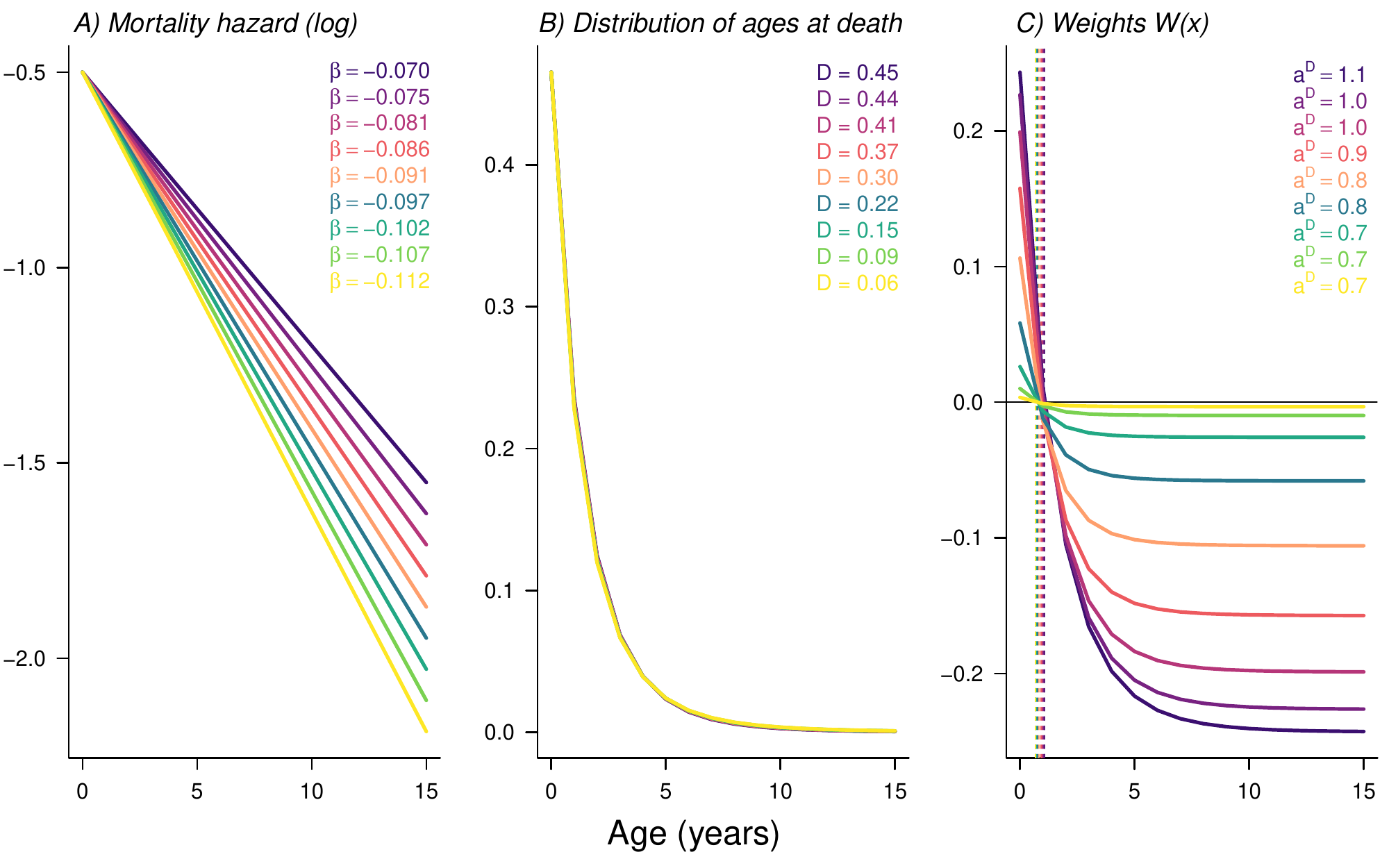}
		\caption{Gompertz mortality model with negative rate of aging ($\beta<0$) for different rates of aging $\beta$ and a fixed level of mortality $\alpha=\mathrm{e}^{-0.5}$}
		\label{Fig_5}
	\end{center}
\end{figure}


\section{Conclusion}

In this article we uncovered how age-specific patterns of mortality underpin trends in lifespan variation as measured by Drewnowski's index~$D$, a complementary indicator to the Gini coefficient of the life table, by means of formal demography. We contribute to the literature by disentangling how changes in age patterns of mortality affect lifespan variation and provide an analytical proof for the existence of a threshold age $a^D$ below which mortality improvements are translated into increasing lifespan equality and above which these improvements translate into increasing lifespan inequality. Previous research determined such age for the life table entropy \citep{aburto2019threshold}, the variance \citep{Gillespie2014} and $e^\dagger$ \citep{Zhang2009}.

We test and illustrate our results under multiple scenarios of mortality changes with age, including positive, zero and negative rates of aging. This is important because shapes of mortality patterns across the tree of life vary substantially (Jones, Scheuerlein et al. \citeyear{jones2014diversity}). For example, while some species such as humans exhibit increasing mortality over adult ages, other species experience an unchanged force of mortality over age, such as \emph{Hydra} (Schaible, Scheuerlein et al. \citeyear{schaible2015constant}), or even declining adult mortality, called `negative senescence'  \citep{vaupel2004case, baudisch2008inevitable, cayuela2019life}. Our experiments demonstrate how Drewnowski's index~$D$ can serve as an indicator of the `shape' of mortality patterns \citep{Baudisch2011, Wrycza2015}, distinguishing between increasing mortality $(D>0.5)$, constant mortality $(D=0.5)$, and decreasing mortality $(D<0.5)$ with age. These properties, along with our analytical findings, support studying lifespan variation alongside life expectancy trends in multiple species. 

\section*{Acknowledgements}

This work was partially funded by the European Research Council (Grant ERC-2019-AdG-884328) (J.M.A. and F.V.). J.M.A. also acknowledges funding from the British Academy's Newton International Fellowship (Grant NIFBA19/190679) and the Leverhulme Trust Large Centre, and F.V. from the Juan de la Cierva Programme, Spanish Ministry of Science and Innovation (Grant IJC2019-039144-I).



\bibliographystyle{elsarticle-harv} 
\bibliography{references}





\end{document}